\documentclass[conference]{IEEEtran}

\IEEEoverridecommandlockouts

\usepackage[utf8]{inputenc} 
\usepackage[T1]{fontenc}
\usepackage{url}
\usepackage{comment}
\usepackage{cite}
\usepackage[cmex10]{amsmath} % Use the [cmex10] option to ensure complicance
\usepackage{amssymb}
\usepackage{xcolor}
\usepackage{graphicx}
\usepackage{dirtytalk}

\newcommand{\F}{\mathbb{F}}
\newcommand{\Z}{\mathbb{Z}}
\newcommand{\E}{\mathbb{E}}
\newcommand{\dH}{\mathcal{D}}
\newcommand{\I}{\mathcal{I}}
\newcommand{\cl}{\operatorname{cl}}

\newcommand{\code}{\mathcal{C}}

\newcommand{\ttau}{\widetilde{\tau}}
\newcommand{\aalpha}{\widetilde{\alpha}}
\newcommand{\duale}[1]{#1^{\perp}}

\renewcommand{\le}{\leqslant}
\renewcommand{\leq}{\leqslant}
\renewcommand{\ge}{\geqslant}
\renewcommand{\geq}{\geqslant}

\newcommand{\Cref}[1]{Co\-ro\-lla\-ry\,\ref{#1}}



\outer\def\proclaim #1. #2\par{\medbreak
 \noindent{\bf#1.\enspace}{\sl#2\par}%
 \ifdim\lastskip<\medskipamount \removelastskip\penalty55\medskip\fi}

\newtheorem{theorem}{Theorem}
\newtheorem{proposition}{Proposition}
\newtheorem{definition}{Definition}

\newtheorem{lemma}{Lemma}
\newtheorem{corollary}{Corollary}

\newtheorem{remark}{Remark}
\newtheorem{conjecture}{Conjecture}

\makeatletter
\def\@begintheorem#1#2{%
  \trivlist
  \item[\hskip\labelsep{\bfseries #1\ #2.}]%
}
\def\@opargbegintheorem#1#2#3{%
  \trivlist
  \item[\hskip\labelsep{\bfseries #1\ #2\ (#3).}]%
}
\makeatother

\begin{document}

\title{A Matroidal Framework\\ for Random Access in Linear Codes%
\thanks{The work of D. Bar-Lev was supported in part by the Swiss National Science Foundation under Grant 212865 and by Schmidt Sciences. H. H. L\'opez was partially supported by the NSF grants DMS-2401558 and DMS-2502705. G. L. Matthews is supported in part by NSF DMS-2502705 and the Commonwealth Cyber Initiative. F. Salizzoni is supported by the Swiss National Science Foundation under the grant P5R5-2\_239500}}

\author{%
\IEEEauthorblockN{%
Daniella Bar-Lev\IEEEauthorrefmark{1}\IEEEauthorrefmark{2},
Hiram H. L\'opez\IEEEauthorrefmark{3},
Gretchen L. Matthews\IEEEauthorrefmark{3},
and Flavio Salizzoni\IEEEauthorrefmark{1}}
\IEEEauthorblockA{%
\IEEEauthorrefmark{1}\textit{Department of Mathematics,
Universit\"at Z\"urich}, Z\"urich, Switzerland\\
\IEEEauthorrefmark{2}\textit{Department of Computer Science,
Technion---Israel Institute of Technology}, Haifa 3200003, Israel\\
\IEEEauthorrefmark{3}\textit{Department of Mathematics,
Virginia Tech}, Blacksburg, VA, USA\\
daniellalev@technion.ac.il,\quad
\{hhlopez,gmatthews\}@vt.edu,\quad
flavio.salizzoni@math.uzh.ch}}
\maketitle

\begin{abstract}
We study random access in linear codes when codeword symbols are sampled independently and uniformly with replacement. A code is recovery balanced if the expected number of samples required to recover a codeword coordinate is the same for every coordinate. We express the recovery expectations in terms of the associated matroid and derive a coordinatewise identity relating these values for a code and its dual. This proves that a linear code is recovery balanced if and only if its dual is recovery balanced, establishing a conjecture of Gruica et al. We further define the recovery balance for arbitrary matroids and show that asymptotically almost every sparse paving matroid is not recovery balanced.
\end{abstract}

\begin{IEEEkeywords}
DNA-based data storage, coded information retrieval, coupon collector problem, linear codes, random sampling, matroid, sparse paving.
\end{IEEEkeywords}

\section{Introduction}
The coded coupon collector problem studies the recovery of information from randomly sampled encoded symbols. An information vector is encoded into a codeword, whose symbols are sampled independently and uniformly with replacement until the collected symbols suffice to recover the requested information. The objective is to minimize the expected number of samples required for recovery. Introduced in~\cite{barlev2025cover} with motivation from DNA-based data storage systems~\cite{church2012next,goldman2013towards,grass2015robust,erlich2017dna,organick2018random,bar2025scalable}, this problem has since been studied under several recovery objectives.

In the full-recovery setting, the entire information vector must be reconstructed. Work in this direction includes exact recovery-time formulas, optimality results for specific code families, and the analysis of random linear codes~\cite{barlev2025cover,bertuzzo2026duality,gross2026optimal}. Other formulations require recovering a requested collection of files~\cite{abraham2024covering} or information symbols~\cite{gruica2026generalized}, and study the worst-case or average expected recovery time over requests of a given size. A complementary direction fixes a partition into files and investigates the tradeoff between their individual expected retrieval times~\cite{barlev26coded,vlachos2026hyperbolic}. Partial reconstruction, where only a fraction of the information needs to be recovered, has also been studied using Luby Transform (LT) codes~\cite{akkaya2026dupleth}. 

Alongside these different recovery objectives, several works extend the sampling and storage models. These include recovery under noisy or nonuniform sampling~\cite{hanna2026reliability,cao2026modeling,abraham2026coverage}, reconstruction of a failed storage container from surviving containers~\cite{levy2026expected}, and full recovery through parallel sampling from multiple containers~\cite{kong2026coverage}. Related generalizations of the classical coupon collector problem allow each draw to reveal a subset of items, including contiguous windows~\cite{grunbaum2026general}, as well as the reconstruction of composite symbols and combinations of motifs from repeated observations~\cite{preuss2024sequencing,cohen2025optimizing,sokolovskii2024coding}.

In this work, we consider the uniform noiseless model and focus on random access, where only a single information symbol is requested~\cite{gruica2025combinatorial,gruica2026geometry,boruchovsky2026making,bodur2026random,wang2026random}. For linear codes, recovery is possible once the requested symbol can be expressed as a linear combination of the sampled symbols. The expected recovery time therefore depends on the linear dependencies among the columns of the generator matrix.

To study this dependence, Gruica et al.~\cite{gruica2025combinatorial} considered the expected recovery times of all codeword coordinates and introduced \emph{recovery balanced codes}, for which these expectations are equal. They proved that the average expectation over the coordinates of an $[n,k]_q$ code is $k$, and hence every coordinate of a recovery balanced code has expected recovery time $k$. Under a systematic encoding, this also holds for every information symbol and equals the uncoded recovery time. Thus, improving on this baseline requires a code that is not recovery balanced. The same work conjectured that recovery balance is preserved under code duality.

In this paper, we observe that being recovery balanced is a matroidal property in the sense that, to determine whether a code is recovery balanced, it suffices to examine the associated matroid. We therefore introduce the notion of a recovery-balanced matroid and show that this property is preserved under duality if the matroid has neither loops nor coloops. In particular, it follows that a linear code is recovery balanced if and only if its dual is recovery balanced, establishing the conjecture over every finite field.

We then determine how common the recovery balanced property was among matroids. Using the correspondence with independent sets of Johnson graphs, we prove that, for every rank $1\le k\le n-1$, the proportion of recovery balanced sparse paving matroids on $n$ elements is at most $4/n$ for all sufficiently large $n$. Consequently, asymptotically almost every sparse paving matroid is not recovery balanced.

The rest of the paper is organized as follows. Section~\ref{sec:def} introduces the notation and basic definitions. Section~\ref{sec:reform} expresses coordinate recovery expectations as matroid invariants. Section~\ref{sec:main} derives the duality identity and proves that recovery balance is preserved under duality. Finally, Section~\ref{sec:sparse} characterizes recovery balance for sparse paving matroids and shows that asymptotically almost every sparse paving matroid is not recovery balanced.

\section{Notation and Definitions}\label{sec:def}

Throughout, $q$ is a prime power, $\F_q$ is the finite field with $q$ elements, and $[n]=\{1,\ldots,n\}$ for a positive integer $n$. We write $H_n=\sum_{j=1}^n 1/j$ for the $n$-th harmonic number, with
$H_0=0$, and $\langle\cdot\rangle$ for the span over $\F_q$.

\subsection{Linear Codes and Recovery Balance}

A \emph{linear $[n,k]_q$ code} is a $k$-dimensional subspace $\code\subseteq\F_q^n$. A \emph{generator matrix} of $\code$ is a rank-$k$ matrix $G\in\F_q^{k\times n}$ whose rows span $\code$. We denote its columns by $g_1,\ldots,g_n$. The \emph{dual code} is
the $[n,n-k]_q$ code
\[
  \duale{\code}
  =\{x\in\F_q^n:x\cdot c=0\text{ for all }c\in\code\}.
\]
A generator matrix of $\duale{\code}$ is a \emph{parity-check matrix}
of $\code$.
For background on linear codes and duality, see e.g.~\cite{macwilliams1977theory,roth2006introduction}.

We consider the random-access model of \cite{gruica2025combinatorial}, in which column indices of $G$ are drawn independently and uniformly from $[n]$, with replacement. For $i\in[n]$, let $\ttau_i(G)$ be the number of draws required until $g_i$ belongs to the span of the drawn columns. In particular,
$\ttau_i(G)=0$ if $g_i=0$.
This random variable was introduced in~\cite{gruica2025combinatorial}, where it was shown that its expectation depends
only on $\code$, so we write $\E[\ttau_i(\code)]$.

\begin{definition}[{\cite[Definition~2]{gruica2025combinatorial}}]
\label{def:rb}
A code $\code$ is \emph{recovery balanced} if
$\E[\ttau_1(\code)]=\cdots=\E[\ttau_n(\code)]$.
\end{definition}

For every $[n,k]_q$ code $\code$,
\cite[Theorem~1]{gruica2025combinatorial} gives
\begin{equation}\label{eq:average}
  \sum_{i=1}^n\E[\ttau_i(\code)]=kn.
\end{equation}
Consequently, if $\code$ is recovery balanced, then
$\E[\ttau_i(\code)]=k$ for every $i\in[n]$.
It was conjectured in~\cite{gruica2025combinatorial} that recovery balance is preserved under duality.
\begin{conjecture}[{\cite[Conjecture~1]{gruica2025combinatorial}}]
\label{conj:main}
A code $\code$ is recovery balanced if and only if its dual
$\duale{\code}$ is recovery balanced.
\end{conjecture}

Conjecture~\ref{conj:main} was proved for codes with a transitive permutation automorphism group. 
In Section~\ref{sec:main} we prove the conjecture for every linear $[n,k]_q$ code. As in~\cite{gruica2025combinatorial}, for $i\in[n]$ and an integer $0\le s\le n$, let
\[
  \aalpha_i(s)
  :=\bigl|\{S\subseteq[n]:|S|=s,\,
       g_i\in\langle g_j:j\in S\rangle\}\bigr|.
\]
The argument of \cite[Lemma~1]{gruica2025combinatorial}, applied to
recovery of $g_i$, gives
\begin{equation}\label{eq:expectation}
  \E[\ttau_i(\code)]
  =nH_n-\sum_{s=1}^{n-1}\frac{\aalpha_i(s)}{\binom{n-1}{s}},
\end{equation}
provided $g_i\ne0$, or equivalently $\aalpha_i(0)=0$.
Thus, $\aalpha_i(s)$ and the recovery expectations are determined
by the linear dependencies among the columns of $G$.
We express these dependencies using matroids, which are defined in the next section.

\subsection{Matroids and Duality}\label{sec:matroids}

A matroid is a combinatorial object that captures the notion of independence. Here, we recall the rank-based definition of a matroid and express independent sets, closure, and circuits in terms of the rank function. See~\cite{oxley2006matroid} for further background.

\begin{definition}
\label{def:matroid}
A \emph{matroid} is a pair $M=(E,r)$, where $E$ is a finite
\emph{ground set} and the \emph{rank function}
$r:2^E\to\Z_{\ge0}$ satisfies, for all $A,B\subseteq E$,
\begin{enumerate}
  \item[\textnormal{(R1)}] $0\le r(A)\le|A|$;
  \item[\textnormal{(R2)}] $r(A)\le r(B)$ if $A\subseteq B$;
  \item[\textnormal{(R3)}]
  $r(A\cup B)+r(A\cap B)\le r(A)+r(B)$.
\end{enumerate}
\end{definition}

Let $\code$ be an $[n,k]_q$ code with a generator matrix $G$. The matroid associated with~$\code$ 
is~$M[G]=([n],r)$, where
\[
  r(A)=\dim_{\F_q}\langle g_j:j\in A\rangle,
  \qquad A\subseteq[n].
\]
If $G$ generates $\code$, we write $M(\code)=M[G]$ and call it
the \emph{matroid of the code}. This is independent of the choice
of $G$, since invertible row operations preserve dependencies
among columns. A matroid isomorphic to such an $M[G]$ is said to be
\emph{$\F_q$-representable}.

\begin{definition}
\label{def:closure}
For a matroid $M=(E,r)$, a set $A\subseteq E$ is \emph{independent} if $r(A)=|A|$ and \emph{dependent} otherwise. A minimal dependent set is called a \emph{circuit}. Moreover, the \emph{closure} of $A$ is
\[
  \cl_M(A)=\{e\in E:r(A\cup\{e\})=r(A)\}.
\]

\end{definition}

For a representable matroid, adjoining an index to $A$ leaves its rank unchanged precisely when the corresponding column lies in the
span of the columns indexed by $A$. Hence
\begin{equation}\label{eq:closure-span}
  i\in\cl_{M[G]}(A)
  \iff g_i\in\langle g_j:j\in A\rangle.
\end{equation}
This allows us to express $\aalpha_i(s)$ in terms of matroid closure.

\begin{definition}
\label{def:loop}
An element $e\in E$ is a \emph{loop} of $M$ if $r(\{e\})=0$,
and a \emph{coloop} if $r(E\setminus\{e\})=r(E)-1$.
Equivalently, a coloop belongs to every maximal independent set.
\end{definition}

In $M[G]$, an index $i$ is a loop if and only if $g_i=0$, and a
coloop if and only if $g_i\notin\langle g_j:j\ne i\rangle$.
Since $r(\varnothing)=0$, the closure definition gives 
\begin{itemize}
\item  $i\in\cl_M(\varnothing)$ if and only if $i$ is a loop of $M$.
\item $ i\in\cl_M(E\setminus\{i\})$ if and only if $i$ is not a coloop of $M$.
\end{itemize}

\begin{definition}
\label{def:dual}
The \emph{dual} of $M=(E,r)$ is the matroid $M^*=(E,r^*)$, where
\begin{equation}\label{eq:dual-rank}
  r^*(A)=|A|+r(E\setminus A)-r(E),\qquad A\subseteq E.
\end{equation}
The function $r^*$ satisfies \textnormal{(R1)--(R3)}, and
$(M^*)^*=M$.
\end{definition}

We will use the following standard consequences of duality, which follow directly from the definitions.

\begin{proposition}[\cite{oxley2006matroid}]\label{prop:dual-facts}
\begin{enumerate}
  \item[\textnormal{(a)}] An element is a loop of $M^*$ if and
  only if it is a coloop of $M$, and a coloop of $M^*$ if and
  only if it is a loop of $M$
  \textnormal{\cite[\S2.1]{oxley2006matroid}}.
  \item[\textnormal{(b)}] If $G$ and $H$ are generator and
  parity-check matrices of $\code$, respectively, then
  $M[H]=(M[G])^*$, or equivalently
  $M(\duale{\code})=M(\code)^*$
  \textnormal{\cite[\S2.2, Thm.~2.2.8]{oxley2006matroid}}.
\end{enumerate}
\end{proposition}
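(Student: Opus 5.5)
Part (a) follows from the dual rank formula~\eqref{eq:dual-rank}, and part (b) from computing the rank function of $M[H]$ by rank--nullity and checking that it agrees with $r^*$.

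For part~(a), I would apply \eqref{eq:dual-rank} to a singleton. For $e\in E$,
\[
  r^*(\{e\})=1+r(E\setminus\{e\})-r(E).
\]
By (R2) and submodularity (R3), $r(E\setminus\{e\})\in\{r(E)-1,r(E)\}$. Hence $r^*(\{e\})=0$ exactly when $r(E\setminus\{e\})=r(E)-1$, that is, exactly when $e$ is a coloop of $M$ by Definition~\ref{def:loop}. So the loops of $M^*$ are the coloops of $M$. For the second claim I would apply this first claim to $M^*$ in place of $M$ and use $(M^*)^*=M$: the loops of $M$ are the coloops of $M^*$.

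For part~(b), write $E=[n]$ and let $r_G$ and $r_H$ be the rank functions of $M[G]$ and $M[H]$. Fix $A\subseteq[n]$. It suffices to show
\[
  r_H(A)=|A|+r_G([n]\setminus A)-k,
\]
since $r_G([n])=k$. I would compute both sides through spaces of vectors with restricted support.

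\emph{Step 1.} $r_H(A)$ is the rank of the columns of $H$ indexed by $A$. This equals the dimension of the projection $\pi_A(\duale{\code})$ of the row space $\duale{\code}$ onto the coordinates in $A$. The kernel of $\pi_A$ restricted to $\duale{\code}$ is $\{x\in\duale{\code}:\operatorname{supp}(x)\subseteq[n]\setminus A\}$. By rank--nullity, $r_H(A)=(n-k)-\dim$ of this kernel.

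\emph{Step 2.} $x\in\duale{\code}$ if and only if $x$ is orthogonal to every row of $G$, i.e.\ $Gx=\sum_j x_jg_j=0$. So vectors of $\duale{\code}$ supported on $B=[n]\setminus A$ are exactly the linear dependencies among the columns $g_j$, $j\in B$. This is the null space of the $k\times|B|$ submatrix $G_B$, which has dimension $|B|-r_G(B)$.

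\emph{Step 3.} Combining the two steps gives
\[
  r_H(A)=n-k-(n-|A|)+r_G(B)=|A|+r_G(B)-k=r_G^*(A).
\]
Hence $M[H]=M[G]^*$. The equivalent form $M(\duale{\code})=M(\code)^*$ follows from the independence of $M(\cdot)$ from the choice of generator matrix.

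The only step needing care is Step~1: identifying the column rank of $H_A$ with the dimension of the projection of the row space. This holds because the row rank of $H_A$ equals its column rank, and the rows of $H_A$ span $\pi_A(\duale{\code})$. The rest is routine linear algebra.
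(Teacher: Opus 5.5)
Your argument is correct. Part~(a) is exactly the paper's own reasoning. You evaluate \eqref{eq:dual-rank} on a singleton, read off that $r^*(\{e\})=0$ precisely when $e$ is a coloop of $M$, and get the converse from $(M^*)^*=M$. You also make explicit, via the rank axioms, that $r(E\setminus\{e\})\in\{r(E)-1,r(E)\}$, which the paper leaves implicit.

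Part~(b) is where you take a genuinely different route. The paper gives no argument and defers to Oxley's Theorem~2.2.8. That result is stated for a matrix in standard form $[I_k\,|\,A]$ and its partner $[-A^{T}\,|\,I_{n-k}]$. The textbook proof shows that $B$ is a basis of one vector matroid exactly when $[n]\setminus B$ is a basis of the other, by comparing complementary square submatrices.

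Your rank--nullity computation instead checks the dual rank formula directly, for an arbitrary generator/parity-check pair. You use that $r_H(A)=\dim\pi_A(\duale{\code})$, that the kernel of $\pi_A$ on $\duale{\code}$ consists of the dual codewords supported on $B=[n]\setminus A$, and that these are exactly the null vectors of $G_B$, of dimension $|B|-r_G(B)$.

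This gives a basis-free, self-contained proof. It needs no reduction to standard form, so there is nothing to check about row operations and coordinate permutations being compatible with both the matroid and orthogonality. It also lands directly on the rank-function definition of $M^*$ that the paper adopts in Definition~\ref{def:dual}. The standard-form route, in exchange, gives the basis description of duality and explicit representations of $M^*$. Those are useful elsewhere, but in this paper they would still have to be matched to \eqref{eq:dual-rank} through the usual equivalence of definitions.
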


\begin{comment}
\begin{IEEEproof}
For (a), \eqref{eq:dual-rank} gives
\[
  r^*(\{e\})=1+r(E\setminus\{e\})-r(E),
\]
which is zero exactly when $e$ is a coloop of $M$.
The reverse correspondence follows from $(M^*)^*=M$.
Part (b) is the standard duality relation between the vector
matroids of a generator matrix and a parity-check matrix; see
\cite[\S2.2]{oxley2006matroid} or
\cite[Ch.~7]{macwilliams1977theory}.
\end{IEEEproof}
\end{comment}

The next lemma relates closure in $M$ to closure in $M^*$ and
will be used to prove the main duality identity.
We give its proof for completeness. 
\begin{lemma}[\cite{oxley2006matroid}]\label{lem:complement}
Let $M$ be a matroid on $E=[n]$, let $i\in[n]$, and let
$S\subseteq[n]\setminus\{i\}$. Set
$T=[n]\setminus(S\cup\{i\})$. Then
\[
  i\in\cl_M(S)\iff i\notin\cl_{M^*}(T).
\]
\end{lemma}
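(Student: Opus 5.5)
The plan is to unfold both closure conditions into rank equalities and use the dual rank formula \eqref{eq:dual-rank} to turn the condition on $T$ into one on $S$. The key observation is that the complements of $T$ and $T\cup\{i\}$ in $[n]$ are $S\cup\{i\}$ and $S$, respectively. So a statement about $r^*$ on $T$ and $T\cup\{i\}$ becomes a statement about $r$ on $S\cup\{i\}$ and $S$.

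First, I compute $r^*(T\cup\{i\})-r^*(T)$ using \eqref{eq:dual-rank}. Since $i\notin T$, we have $|T\cup\{i\}|=|T|+1$. Hence
\[
  r^*(T\cup\{i\})-r^*(T)=1+r(S)-r(S\cup\{i\}),
\]
where the terms $r(E)$ cancel.

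Next, I apply Definition~\ref{def:closure}. We have $i\in\cl_{M^*}(T)$ if and only if this difference is $0$, which happens if and only if $r(S\cup\{i\})=r(S)+1$. The latter says exactly that $r(S\cup\{i\})\ne r(S)$, that is, $i\notin\cl_M(S)$. Here I use that adding one element raises the rank by either $0$ or $1$. This follows from (R2) together with (R3) applied to $S$ and $\{i\}$, along with (R1) for the singleton $\{i\}$. Taking negations on both sides gives the stated equivalence.

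The only step needing care is this $0$-or-$1$ increment fact, and it is routine. No real obstacle is expected. The whole argument is a direct computation once the complements are identified correctly.
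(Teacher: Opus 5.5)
Your proposal is correct and follows the paper's proof essentially line for line. Like the paper, you use the dual rank formula to get $r^*(T\cup\{i\})-r^*(T)=1-\bigl(r(S\cup\{i\})-r(S)\bigr)$, then conclude from the fact that a single-element rank increment lies in $\{0,1\}$. Your explicit derivation of that fact from (R1)--(R3), applied to $S$ and $\{i\}$ with $S\cap\{i\}=\varnothing$, is a small detail the paper leaves implicit.
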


\begin{IEEEproof}
Since $E\setminus T=S\cup\{i\}$ and
$E\setminus(T\cup\{i\})=S$, \eqref{eq:dual-rank} gives
\[
\begin{aligned}
  r^*(T)&=|T|+r(S\cup\{i\})-r(E),\\
  r^*(T\cup\{i\})&=|T|+1+r(S)-r(E).
\end{aligned}
\]
Subtracting yields
\[
  r^*(T\cup\{i\})-r^*(T)
  =1-\bigl(r(S\cup\{i\})-r(S)\bigr).
\]
By \textnormal{(R1)--(R3)}, the rank increment
$r(S\cup\{i\})-r(S)$ belongs to $\{0,1\}$.
It is zero exactly when $i\in\cl_M(S)$, in which case the dual
increment is one and $i\notin\cl_{M^*}(T)$; otherwise, the dual
increment is zero and $i\in\cl_{M^*}(T)$.
\end{IEEEproof}
%==========================================

%===============================================================
\section{The expectation as a matroid invariant}\label{sec:reform}
%===============================================================

Throughout this section $\code$ is an $[n,k]_q$ code with generator matrix $G$ and matroid $M=M(\code)$. By \eqref{eq:closure-span}, the counting function $\aalpha_i$ of \eqref{eq:expectation} is purely matroidal:
\begin{equation}\label{eq:alpha-matroid}
  \aalpha_i(s) \;=\; \bigl|\{ S\subseteq[n] : |S|=s,\ i \in \cl_M(S) \}\bigr| .
\end{equation}
In particular $\aalpha_i(s)$, and therefore (by (\ref{eq:expectation})) $\E[\ttau_i(\code)]$, depends only on the matroid $M(\code)$; this recovers and refines \cite[Claim~2]{gruica2025combinatorial}.

It is convenient to separate, in \eqref{eq:alpha-matroid}, the subsets that contain~$i$ from those that do not.

\begin{definition}\label{def:a}
For $i\in[n]$ and $0\le s\le n-1$, set
\[
  a_i^M(s) \;:=\; \bigl|\{ S\subseteq[n]\setminus\{i\} : |S|=s,\ i\in\cl_M(S) \}\bigr|,
\]
and
\[
  A_i^M := \sum_{s=1}^{n-1} \frac{a_i^M(s)}{\binom{n-1}{s}}.
  \]
\end{definition}

\begin{lemma}\label{lem:reduced}
Let $i\in[n]$ be such that $g_i \ne 0$ (i.e.\ $i$ is not a loop of $M$). Then
\[
  \aalpha_i(s) = \binom{n-1}{s-1} + a_i^M(s) \quad(1\le s\le n-1),
\]
and $\E[\ttau_i(\code)] \;=\; n - A_i^M$.
\end{lemma}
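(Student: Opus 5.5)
The plan is to split the sets counted in \eqref{eq:alpha-matroid} according to whether they contain $i$, and then substitute the result into \eqref{eq:expectation}.

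\textbf{Splitting the count.} Fix $1\le s\le n-1$ and let $S\subseteq[n]$ with $|S|=s$. If $i\in S$, then $i\in\cl_M(S)$ automatically, because $r(S\cup\{i\})=r(S)$. There are exactly $\binom{n-1}{s-1}$ such sets. If $i\notin S$, then $S\subseteq[n]\setminus\{i\}$, and by Definition~\ref{def:a} the number of such $S$ with $i\in\cl_M(S)$ is $a_i^M(s)$. Adding the two disjoint contributions gives
\[
\aalpha_i(s)=\binom{n-1}{s-1}+a_i^M(s).
\]

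\textbf{Substituting into \eqref{eq:expectation}.} Since $g_i\neq 0$, formula \eqref{eq:expectation} applies. Plugging in the decomposition, we get
\[
\E[\ttau_i(\code)]=nH_n-\sum_{s=1}^{n-1}\frac{\binom{n-1}{s-1}}{\binom{n-1}{s}}-A_i^M.
\]
Using $\binom{n-1}{s-1}/\binom{n-1}{s}=s/(n-s)$, the middle sum becomes
\[
\sum_{s=1}^{n-1}\frac{s}{n-s}.
\]
Reindex with $t=n-s$, which runs from $1$ to $n-1$:
\[
\sum_{t=1}^{n-1}\frac{n-t}{t}=nH_{n-1}-(n-1).
\]

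\textbf{Simplifying the constant.} It remains to check that $nH_n-nH_{n-1}+(n-1)=n$. Since $nH_n-nH_{n-1}=n\cdot\frac1n=1$, the left side equals $1+(n-1)=n$. Hence $\E[\ttau_i(\code)]=n-A_i^M$.

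The whole argument is routine. The only step needing care is the harmonic-sum simplification, where the binomial ratio must be handled correctly and the case $s=n$ must not be included in the sum.
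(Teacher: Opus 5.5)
Your proposal is correct and follows the paper's proof essentially step for step. It uses the same split of the $s$-subsets by whether they contain $i$, the same ratio $\binom{n-1}{s-1}/\binom{n-1}{s}=s/(n-s)$, and the same reindexing $t=n-s$, giving $nH_{n-1}-(n-1)=nH_n-n$ and hence $\E[\ttau_i(\code)]=n-A_i^M$.
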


\begin{IEEEproof}
If $i\in S$ then, it holds that $i\in\cl_M(S)$, and there are $\binom{n-1}{s-1}$
such subsets of size $s$. Otherwise, $i\notin S$ and the subsets not containing $i$ are counted by $a_i^M(s)$. This proves the first identity. Substituting it into \eqref{eq:expectation} gives
\[
  \E[\ttau_i(\code)] = nH_n - \sum_{s=1}^{n-1}\frac{\binom{n-1}{s-1}}{\binom{n-1}{s}} - A_i^M .
\]
Since $\binom{n-1}{s-1}/\binom{n-1}{s} = s/(n-s)$, the middle sum equals
\[
  \sum_{s=1}^{n-1}\frac{s}{n-s}
  \;=\; \sum_{t=1}^{n-1}\frac{n-t}{t}
  \;=\; n H_{n-1} - (n-1) \;=\; nH_n - n,
\]
where the first equality follows from substituting $t=n-s$.
Therefore $\E[\ttau_i(\code)] = nH_n - (nH_n - n) - A_i^M = n - A_i^M$.
\end{IEEEproof}

Lemma~\ref{lem:reduced} already explains the two extreme cases. If $i$ is a coloop,
then no subset $S\not\ni i$ has $i\in\cl_M(S)$, so $a_i^M(s)=0$ for all $s$, whence
$A_i^M=0$ and $\E[\ttau_i(\code)]=n$. At the other extreme, for an MDS code one
recovers $\E[\ttau_i(\code)]=k$ (cf.\ \cite[Proposition~1]{gruica2025combinatorial}) since $a_i^M(s)=0$ for $1 \leq s \leq k-1$ and $a_i^M(s)=\binom{n-1}{s}$ for $k\leq s \leq n-1$.

As we already observed, the property of being recovery balanced is determined by the associated matroid, and actually at the matroidal level,  there is no need of representability. Indeed, the notion of recovery balance can be naturally extended to arbitrary matroids as follows. 
\begin{definition}
    A matroid $M=([n],r)$ is said to be recovery balanced if $A_i^M=A_j^{M}$ for every $i,j\in[n]$.
\end{definition}
It is clear from Lemma~\ref{lem:reduced} that a code is recovery balanced if and only if the associated matroid is recovery balanced.
We conclude the section with 
the following identity, which is the main tool for our duality result.

\begin{lemma}\label{lem:reflection}
Let $M$ be a matroid on $[n]$ and $i\in[n]$. For all $0\le s\le n-1$,
\[
  a_i^M(s) \;=\; \binom{n-1}{s} - a_i^{M^*}(n-1-s).
\]
\end{lemma}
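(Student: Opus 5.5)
The plan is to establish a bijection between the $s$-subsets of $[n]\setminus\{i\}$ and the $(n-1-s)$-subsets of $[n]\setminus\{i\}$ via complementation, and then to use Lemma~\ref{lem:complement} to see how the closure condition transforms. Concretely, for $S\subseteq[n]\setminus\{i\}$ we set $T=[n]\setminus(S\cup\{i\})$. The map $S\mapsto T$ is an involution on the subsets of $[n]\setminus\{i\}$, and it sends sets of size $s$ to sets of size $n-1-s$. Hence it restricts to a bijection between the $\binom{n-1}{s}$ subsets of size $s$ and the $\binom{n-1}{n-1-s}=\binom{n-1}{s}$ subsets of size $n-1-s$.

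Next, split the $s$-subsets $S$ of $[n]\setminus\{i\}$ according to whether $i\in\cl_M(S)$. By Definition~\ref{def:a}, exactly $a_i^M(s)$ of them satisfy $i\in\cl_M(S)$. By Lemma~\ref{lem:complement}, $i\in\cl_M(S)$ holds if and only if $i\notin\cl_{M^*}(T)$. Therefore, under the bijection, the sets counted by $a_i^M(s)$ correspond exactly to the $(n-1-s)$-subsets $T$ of $[n]\setminus\{i\}$ with $i\notin\cl_{M^*}(T)$. There are $\binom{n-1}{s}-a_i^{M^*}(n-1-s)$ such $T$, because $a_i^{M^*}(n-1-s)$ counts those $T$ with $i\in\cl_{M^*}(T)$, applying Definition~\ref{def:a} to $M^*$. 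This gives the identity.

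No step should be a real obstacle. The only point needing care is the boundary values $s=0$ and $s=n-1$, where one of $S,T$ is empty. Definition~\ref{def:a} defines $a_i^M(s)$ for all $0\le s\le n-1$, and Lemma~\ref{lem:complement} places no restriction on $S$ beyond $S\subseteq[n]\setminus\{i\}$. So the argument applies uniformly, including when $i$ is a loop or coloop; for instance, at $s=0$ the identity reads $[i\text{ loop of }M]=1-[i\text{ not a coloop of }M^*]$, consistent with Proposition~\ref{prop:dual-facts}(a).
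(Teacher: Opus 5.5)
Your proposal is correct and follows the paper's own proof: the complementation bijection $S\mapsto([n]\setminus\{i\})\setminus S$ combined with Lemma~\ref{lem:complement}, then counting the complement of the sets counted by $a_i^{M^*}(n-1-s)$. Your check of the boundary cases $s=0$ and $s=n-1$ against Proposition~\ref{prop:dual-facts}(a) is also correct.
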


\begin{IEEEproof}
The map $S \mapsto T := ([n]\setminus\{i\})\setminus S$ is a bijection from the
$s$-subsets of $[n]\setminus\{i\}$ onto its $(n-1-s)$-subsets. By
Lemma~\ref{lem:complement}, $i\in\cl_M(S)$ if and only if $i\notin\cl_{M^*}(T)$.
Hence,
\begin{equation*}
\begin{split}
  a_i^M(s) &= \binom{n-1}{\,n-1-s\,} - a_i^{M^*}(n-1-s)
            \\ &= \binom{n-1}{s} - a_i^{M^*}(n-1-s), 
\end{split}
\end{equation*}
since the number of $s$-subsets $S\subseteq[n]\setminus\{i\}$ with $i\in\cl_M(S)$
equals the number of $(n-1-s)$-subsets $T\subseteq[n]\setminus\{i\}$ with
$i\notin\cl_{M^*}(T)$.
\end{IEEEproof}

%===============================================================
\section{The duality identity and proof of Conjecture~\ref{conj:main}}\label{sec:main}
%===============================================================

Recall that a code $\code\subseteq\F_q^n$ is \emph{non-degenerate} if no coordinate
is identically zero on $\code$, i.e.\ if $G$ has no zero column. Note that if both $\code$ and $\duale{\code}$ are non-degenerate, then $M(\code)$ has neither
loops nor coloops. All codes considered in \cite{gruica2025combinatorial} (MDS, Hamming,
simplex, Reed--Muller, Golay codes) are of this type.

\begin{theorem}\label{thm:identity}
Let $M=([n],r)$ be a matroid. Then for every $i\in[n]$,
\[
  A_i^{M} + A_i^{M^*} = n-\ell_i - c_i,
\]
where $\ell_i = 1$ if $i$ is a loop and $c_i = 1$ if $i$ is a coloop of $M$.
\end{theorem}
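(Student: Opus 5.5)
Use Lemma~\ref{lem:reflection} directly. Dividing its identity by $\binom{n-1}{s}=\binom{n-1}{n-1-s}$ gives, for $0\le s\le n-1$,
\[
  \frac{a_i^M(s)}{\binom{n-1}{s}}+\frac{a_i^{M^*}(n-1-s)}{\binom{n-1}{n-1-s}}=1 .
\]
Summing over $s=1,\dots,n-1$ and reindexing the second term by $t=n-1-s$ (so $t$ runs over $0,\dots,n-2$), we get
\[
  A_i^M+\sum_{t=0}^{n-2}\frac{a_i^{M^*}(t)}{\binom{n-1}{t}}=n-1 .
\]
The sum on the left differs from $A_i^{M^*}$ only in its boundary terms. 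It contains the $t=0$ term, which $A_i^{M^*}$ omits, and it lacks the $t=n-1$ term, which $A_i^{M^*}$ includes. Hence
\[
  A_i^M+A_i^{M^*}=n-1-a_i^{M^*}(0)+a_i^{M^*}(n-1),
\]
where both binomial denominators at the endpoints equal $1$.

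It remains to identify the two boundary quantities through loops and coloops. First, $a_i^{M^*}(0)$ counts whether $i\in\cl_{M^*}(\varnothing)$. This holds exactly when $i$ is a loop of $M^*$, which by Proposition~\ref{prop:dual-facts}(a) means $i$ is a coloop of $M$. So $a_i^{M^*}(0)=c_i$. Second, $a_i^{M^*}(n-1)$ counts whether $i\in\cl_{M^*}([n]\setminus\{i\})$. This holds exactly when $i$ is not a coloop of $M^*$, i.e.\ when $i$ is not a loop of $M$. So $a_i^{M^*}(n-1)=1-\ell_i$. Alternatively, Lemma~\ref{lem:reflection} at $s=0$ gives $a_i^{M^*}(n-1)=1-a_i^M(0)=1-\ell_i$.

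Substituting these values yields $A_i^M+A_i^{M^*}=n-1-c_i+1-\ell_i=n-\ell_i-c_i$. The only real care needed is the bookkeeping of the endpoint terms $s=0$ and $s=n-1$, which the definition of $A_i^M$ excludes and includes asymmetrically. That index shift is the one place an off-by-one error could occur. No case split is needed: an element cannot be both a loop and a coloop except when $n$-element edge cases force it, and the formula handles both indicators simultaneously.
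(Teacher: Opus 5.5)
Your proof is correct and matches the paper's argument essentially step for step: you sum the reflection identity of Lemma~\ref{lem:reflection}, reindex, peel off the boundary terms $a_i^{M^*}(0)$ and $a_i^{M^*}(n-1)$, and identify them as $c_i$ and $1-\ell_i$ via Proposition~\ref{prop:dual-facts}(a). One small correction to your closing aside: applying (R3) to $[n]\setminus\{i\}$ and $\{i\}$ shows that a loop $i$ always has $r([n]\setminus\{i\})=r([n])$, so no element is ever both a loop and a coloop (there is no edge-case exception), though your proof never relies on this.
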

\begin{IEEEproof}
Using Lemma~\ref{lem:reflection}, 
\begin{align*}
  A_i^{M}
  &= \sum_{s=1}^{n-1}\frac{a_i^M(s)}{\binom{n-1}{s}}
  = \sum_{s=1}^{n-1}\left(1 - \frac{a_i^{M^*}(n-1-s)}{\binom{n-1}{\,n-1-s\,}}\right)\\
  &= (n-1) - \sum_{s'=0}^{n-2}\frac{a_i^{M^*}(s')}{\binom{n-1}{s'}},
\end{align*}
where we re-indexed $s' = n-1-s$. By isolating the terms $s'=0$ and
$s'=n-1$, we get
\[
  \sum_{s'=0}^{n-2}\frac{a_i^{M^*}(s')}{\binom{n-1}{s'}}
  = a_i^{M^*}(0) + A_i^{M^*} - a_i^{M^*}(n-1).
\]
By Definition~\ref{def:a}, we have
\begin{equation*}
    a_i^{M^*}(0)=\begin{cases}1&\text{ if }i\text{ is a loop in }M^*\\
    0&\text{otherwise,}
    \end{cases}
\end{equation*}
and
\begin{equation*}
    a_i^{M^*}(n-1)=\begin{cases}0&\text{ if }i\text{ is a coloop in }M^*\\
    1&\text{otherwise.}
    \end{cases}
\end{equation*}
Substituting,
\[
  A_i^{M} = n-A_i^{M^*}-a_i^{M^*}(0)-(1-a_i^{M^*}(n-1)),
\]
that is $A_i^{M}+A_i^{M^*}=n-\ell_i-c_i$, as required.
\end{IEEEproof}

We can now settle the conjecture. We first exclude the degenerate case in the following lemma. The proof is immediate from the definitions of loops, coloops and the fact that for a recovery balanced code, $\E[\ttau_i(\code)]=k$ for all $i\in [n]$.

\begin{lemma}\label{lem:nondeg}
Let $\code$ be a recovery balanced $[n,k]_q$ code with $0<k<n$. Then $M(\code)$ has
neither loops nor coloops.
\end{lemma}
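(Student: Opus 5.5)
We argue by contradiction in two cases. Since $\code$ is recovery balanced, \eqref{eq:average} gives $\E[\ttau_i(\code)]=k$ for every $i\in[n]$, and by hypothesis $0<k<n$.

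\emph{Loops.} Suppose $i$ is a loop of $M(\code)$, that is, $g_i=0$. By the definition of $\ttau_i$ we have $\ttau_i(G)=0$ deterministically, so $\E[\ttau_i(\code)]=0$. This contradicts $\E[\ttau_i(\code)]=k>0$. Hence $M(\code)$ has no loops.

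\emph{Coloops.} Suppose $i$ is a coloop of $M(\code)$. Since there are no loops, $g_i\neq 0$, so Lemma~\ref{lem:reduced} applies and gives $\E[\ttau_i(\code)]=n-A_i^M$. As remarked after Lemma~\ref{lem:reduced}, a coloop satisfies $a_i^M(s)=0$ for all $s$: if $i\notin S$, then $S\subseteq[n]\setminus\{i\}$, and $i\notin\cl_M([n]\setminus\{i\})$ by the coloop characterization, so by monotonicity of rank (R2), $i\notin\cl_M(S)$. Therefore $A_i^M=0$ and $\E[\ttau_i(\code)]=n$, contradicting $\E[\ttau_i(\code)]=k<n$.

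The only point needing care is the monotonicity step in the coloop case. If $r(S\cup\{i\})=r(S)$ for some $S\subseteq[n]\setminus\{i\}$, then submodularity (R3) propagates this equality to every superset of $S$ avoiding $i$, in particular to $[n]\setminus\{i\}$. Equivalently, closure is monotone. Since $i\notin\cl_M([n]\setminus\{i\})$, no such $S$ exists. Beyond this, the argument is a direct check.
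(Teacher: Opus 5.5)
Your proof is correct and follows the paper's own argument. You use \eqref{eq:average} to get the common value $k$; a loop is ruled out because $\ttau_i=0$ would force $k=0$, and a coloop is ruled out because $A_i^M=0$ gives $\E[\ttau_i(\code)]=n$, forcing $k=n$. One cosmetic point: the step $i\notin\cl_M([n]\setminus\{i\})\Rightarrow i\notin\cl_M(S)$ rests on submodularity (R3), not on (R2) alone, and you correctly supply that argument in your final paragraph.
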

\begin{comment}
\begin{IEEEproof}
By Equation~\eqref{eq:average} and Definition~\ref{def:rb}, the common value of the
$\E[\ttau_i(\code)]$ is $k$. Suppose $i$ were a coloop of $M(\code)$. As observed
after Lemma~\ref{lem:reduced}, then $\E[\ttau_i(\code)] = n$, forcing $k=n$, a
contradiction. Suppose instead $i$ were a loop, i.e.\ $g_i=0$. Then $g_i$ lies in
the span of the empty set, so $\ttau_i(\code)=0$ deterministically and
$\E[\ttau_i(\code)] = 0$, forcing $k=0$, again a contradiction. Hence $M(\code)$ has
no loops and no coloops.
\end{IEEEproof}
\end{comment}
We are now in position to prove Conjecture~\ref{conj:main}.
\begin{theorem}\label{thm:main}
A code $\code$ is recovery balanced if and only if its dual $\duale{\code}$ is
recovery balanced~\footnote{An AI-generated proof for the case of linear codes, prepared with OpenAI Codex (GPT-6 Astra), was posted pseudonymously online in October 2026. Its accompanying disclosure states that no independent human audit has been conducted on the formal Lean statement. The proof presented here was obtained independently.}.
\end{theorem}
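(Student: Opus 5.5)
By symmetry under duality it suffices to prove one direction: if $\code$ is recovery balanced, then so is $\duale{\code}$. Indeed, $(\duale{\code})^\perp=\code$, and by Proposition~\ref{prop:dual-facts}(b) we have $M(\duale{\code})=M(\code)^*$ with $(M^*)^*=M$. Since recovery balance of a code is equivalent to recovery balance of its matroid (the remark after Lemma~\ref{lem:reduced}), the plan is to work entirely with $M=M(\code)$ and the quantities $A_i^M$.

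First I dispose of the trivial dimensions. If $k=0$, every column of $G$ is zero, so every $\ttau_i=0$ and $\code$ is recovery balanced. In this case $\duale{\code}=\F_q^n$, whose matroid consists of $n$ coloops, so Lemma~\ref{lem:reduced} gives $\E[\ttau_i]=n$ for every $i$, and $\duale{\code}$ is balanced too. The case $k=n$ is the same situation with the roles exchanged. Thus both codes are always balanced when $k\in\{0,n\}$, and the equivalence holds there.

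Now assume $0<k<n$ and that $\code$ is recovery balanced. By Lemma~\ref{lem:nondeg}, $M$ has neither loops nor coloops. By Proposition~\ref{prop:dual-facts}(a), $M^*$ then has neither loops nor coloops either, so $\ell_i=c_i=0$ for all $i$. Theorem~\ref{thm:identity} gives $A_i^{M^*}=n-A_i^M$ for every $i$. Since $A_i^M$ is independent of $i$, so is $A_i^{M^*}$, and hence $M^*$ is recovery balanced.

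To transfer this back to the code, I need $\E[\ttau_i(\duale{\code})]=n-A_i^{M^*}$. This follows from Lemma~\ref{lem:reduced}, which applies because $M^*$ has no loops, i.e.\ the parity-check matrix has no zero columns. Therefore $\E[\ttau_i(\duale{\code})]=A_i^M$ is constant in $i$, and its value equals $n-k$, consistent with \eqref{eq:average}. The only delicate point is ensuring the loop hypothesis of Lemma~\ref{lem:reduced} holds on the dual side, and Lemma~\ref{lem:nondeg} combined with Proposition~\ref{prop:dual-facts}(a) supplies exactly that. The converse direction follows by applying the same argument to $\duale{\code}$.
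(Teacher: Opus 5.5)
Your proof is correct and follows the paper's route. You reduce to one direction via $(\duale{\code})^{\perp}=\code$, settle $k\in\{0,n\}$ directly, and for $0<k<n$ combine Lemma~\ref{lem:nondeg} with Theorem~\ref{thm:identity}. You also make explicit a step the paper leaves implicit: Proposition~\ref{prop:dual-facts} makes $M(\duale{\code})=M^*$ loopless, so Lemma~\ref{lem:reduced} applies on the dual side to turn $A_i^{M^*}$ back into $\E[\ttau_i(\duale{\code})]$.
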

\begin{IEEEproof}
Since $(\duale{\code})^{\perp}=\code$, it suffices to show that if $\code$ is recovery balanced then so is $\duale{\code}$. Let $\code$
be a recovery balanced $[n,k]_q$ code.
If $k=0$, then $\code=\{0\}$ and $\duale{\code}=\F_q^n$, while if $k=n$, then
$\code=\F_q^n$ and $\duale{\code}=\{0\}$. It can be verified that in either case both codes are recovery balanced.

Now assume $0<k<n$. By Lemma~\ref{lem:nondeg}, $M(\code)$ has neither loops nor
coloops, so Theorem~\ref{thm:identity} applies and yields, for every $i\in[n]$,
\[
  \E[\ttau_i(\duale{\code})] = n - \E[\ttau_i(\code)] = n - k,
\]
where we used that the common balanced value equals $k$. As this value is
independent of $i$, the dual $\duale{\code}$ is recovery balanced (with common value
$n-k$, consistently with \eqref{eq:average} for $\duale{\code}$).
\end{IEEEproof}
\begin{remark}\label{rem:pointwise}
Theorem~\ref{thm:identity} is strictly stronger than what
Conjecture~\ref{conj:main} requires. Indeed, it holds coordinatewise, with no assumption of balance. The entire ``profile'' $\bigl(\E[\ttau_i(\duale{\code})]\bigr)_{i\in[n]}$
of the dual is the reflection $\E[\ttau_i] \mapsto n - \E[\ttau_i]$ of that of $\code$ when both $\code$ and $\duale{\code}$ are non-degenerate.
\end{remark}

\section{Sparse paving matroids}\label{sec:sparse}
Over a sufficiently large finite field, a random $[n,k]_q$ code is MDS \cite{GhorpadeLachaud2001} and, therefore, recovery balanced. However, notice that every $[n,k]_q$ MDS code is associated with the same matroid, known as the uniform matroid. So, a natural question to ask is whether a random matroid is recovery balanced. To this end, we study the proportion of sparse paving matroids that are not recovery balanced as $n$ approaches infinity. We begin by reviewing the necessary definitions.
\begin{definition}\label{def:sp}
A matroid $M$ of rank $k$ on $[n]$ is paving if and only if every circuit has cardinality at least $k$. 
A matroid $M$ of rank $k$ on $[n]$ is sparse paving if and only if it is paving and its dual is paving. 
\end{definition}
For a sparse paving matroid it is known that
\begin{enumerate}
  \item[(a)] every subset of $[n]$ of cardinality at most $k-1$ is independent;
  \item[(b)] any two distinct dependent sets of cardinality $k$ intersect in at most $k-2$ elements.
\end{enumerate}
\begin{comment}
Item (a) is a direct consequence of the fact that $M$ is paving. As for item (b), suppose there are two circuits $C_1$ and $C_2$ of $M$ of cardinality $k$ such that their intersection has cardinality $k-1$. The set $U=C_1\cup C_2$ does not contain any basis. In fact, every basis of $M$ has cardinality $k$, and therefore if there were a basis in $U$, it would be of the form $U\setminus \{i\}$ with $i\in U$. However, by the circuit axioms, $U\setminus \{i\}$ contains a circuit and therefore cannot be a basis. It follows that $[n]\setminus U$ cannot be independent in $M^*$, otherwise $U$ would contain a basis for $M$. So $[n]\setminus U$ is dependent, but since $\lvert [n]\setminus u\rvert=n-k-1$, this is in contradiction with the fact that $M^*$ is paving.
\end{comment}
We will use the fact that sparse paving matroids are in bijection with the set of independent sets of the Johnson graph. 
\begin{definition}
The Johnson graph $J(n,k)$ is the graph whose vertex set is $\binom{[n]}{k}$, the set of subsets of $[n]$ with cardinality $k$, and two vertices are adjacent if and only if their intersection has cardinality $k-1$. A family
$\mathcal{A}\subseteq\binom{[n]}{k}$ is an independent set of $J(n,k)$ if no two of its elements are adjacent. We denote by $\I(n,k)$ the collection of independent sets of $J(n,k)$ and we set $I(n,k)=|\I(n,k)|$.
\end{definition}
We write $\dH(M)$ for the family of dependent $k$-sets of $M$ and let $m=\lvert\dH(M)\rvert$.
\begin{lemma}[{\cite[Lemma 6]{bansal2015number}}]\label{lemma:bijection}
Let $1\le k\le n-1$. The map $M\mapsto\dH(M)$ is a bijection between the sparse paving matroids of rank $k$ on $[n]$ and the independent sets of $J(n,k)$. In particular, the number of sparse paving matroids of rank $k$ on $[n]$ equals $I(n,k)$.
\end{lemma}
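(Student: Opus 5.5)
The map $\dH$ is clearly well defined, since every sparse paving matroid $M$ of rank $k$ determines its family $\dH(M)$ of dependent $k$-sets. By property (b) of sparse paving matroids, two distinct members of $\dH(M)$ meet in at most $k-2$ elements, so $\dH(M)$ is an independent set of $J(n,k)$. It remains to show that $\dH$ is injective and surjective.

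For injectivity, I would show that the rank function of a sparse paving matroid $M$ is determined by $\dH(M)$, namely
\[
r(A)=\begin{cases}|A| & |A|\le k-1,\\ k-1 & A\in\dH(M),\\ k & \text{otherwise.}\end{cases}
\]
The first case is property (a). For $|A|=k$, the definition of $\dH(M)$ gives rank $k$ or $k-1$, since $A$ contains independent $(k-1)$-subsets. For $|A|\ge k+1$, suppose $r(A)=k-1$. Then every $k$-subset of $A$ lies in $\dH(M)$. Two $k$-subsets of $A$ sharing $k-1$ elements exist because $|A|\ge k+1$, and this contradicts (b). Hence $r(A)=k$, and two sparse paving matroids with the same $\dH$ coincide.

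For surjectivity, given an independent set $\mathcal{A}$ of $J(n,k)$, I would define $r$ by the formula above with $\dH(M)$ replaced by $\mathcal{A}$. Then I would verify (R1)--(R3) and check that the resulting matroid is sparse paving with $\dH=\mathcal{A}$. (R1) and (R2) are immediate; for (R2) note that a superset of a set in $\mathcal{A}$ of size at least $k+1$ has rank $k\ge k-1$.

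The main work is submodularity (R3). I would reduce it to the local form $r(X\cup e)+r(X\cup f)\ge r(X\cup\{e,f\})+r(X)$, which is equivalent to (R3) for functions satisfying (R2). The only potentially bad case is when both single extensions fail to increase rank while the double extension does. That forces either:
\begin{itemize}
\item $|X|=k-1$, with $X\cup e$ and $X\cup f$ both in $\mathcal{A}$. These two sets share $X$ of size $k-1$, so they are adjacent in $J(n,k)$, which is forbidden.
\item $X\in\mathcal{A}$. Then $r(X\cup e)=k=r(X\cup\{e,f\})$, so the inequality holds.
\end{itemize}
All other cases are routine bookkeeping.

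Finally, I would confirm the sparse paving property. Circuits have size at least $k$, since all sets of size $k-1$ are independent, so $M$ is paving. For the dual, the rank formula \eqref{eq:dual-rank} shows that the dependent $(n-k)$-sets of $M^*$ are exactly the complements of members of $\mathcal{A}$, and every set of size at most $n-k-1$ is independent in $M^*$. Therefore $M^*$ is paving, and $\dH(M)=\mathcal{A}$ by construction.
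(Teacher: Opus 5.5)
Your proof is correct. It is also self-contained in a way the paper is not: the paper gives no proof of this lemma and simply imports it from Bansal--Pendavingh--van der Pol. Its only related input is the list of properties (a) and (b), stated without proof, and you use those legitimately for well-definedness and injectivity.

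A common alternative for the converse direction uses the basis axioms instead of ranks. One checks that $\binom{[n]}{k}\setminus\mathcal{A}$ satisfies basis exchange. Take distinct $B_1,B_2$ and $x\in B_1\setminus B_2$. Either $|B_1\setminus B_2|=1$, and $B_2$ itself is the exchange. Or there are two candidates $y_1,y_2\in B_2\setminus B_1$; the sets $(B_1\setminus\{x\})\cup\{y_j\}$ share a $(k-1)$-set, so they are adjacent in $J(n,k)$ and at most one lies in $\mathcal{A}$. The paving property of $M$ and $M^*$ then follows from the same adjacency observation, using that the complements of $\mathcal{A}$ form an independent set of $J(n,n-k)$.

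That route is somewhat shorter, and its core is the same Johnson-graph adjacency step as your submodularity check. Your rank route buys two things the paper actually uses. First, it yields an explicit rank formula, which is exactly what underlies the closure computations in Lemma~\ref{lemma:alpha}. Second, it gives a one-line check of the dual paving property through \eqref{eq:dual-rank}, in line with the paper's rank-based Definition~\ref{def:matroid}.

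Two small things to tidy:
\begin{itemize}
\item Your reduction to the single ``bad'' configuration tacitly uses unit increase, $r(X)\le r(X\cup\{e\})\le r(X)+1$. You should state it; it is immediate from your formula. Also, the local-to-global equivalence for submodularity holds for any set function, so invoking (R2) there is harmless but unnecessary.
\item Your second bullet is not actually a bad case, since $X\in\mathcal{A}$ gives $r(X\cup\{e\})=k>r(X)$. The bad case forces $|X|=k-1$ with $X\cup\{e\},X\cup\{f\}\in\mathcal{A}$, which is precisely what independence in $J(n,k)$ rules out.
\end{itemize}
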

For all $i\in [n]$, consider
\begin{equation*}
    d_i(M)=\lvert\{H\in\dH(M) : i\in H\}\rvert.
\end{equation*}
Let us now express the $a_i^M(s)$ of a sparse paving matroid in terms of the $d_i(M)$.
\begin{lemma}\label{lemma:alpha}
Let $M$ be a sparse paving matroid of rank $k$ on $[n]$ with $2\leq k\leq n-2$. Then, for every $i\in[n]$,
\begin{equation*}
   a_i^M(s)=
  \begin{cases}
    0 &\text{ if } 1\leq s\leq k-2,\\
    d_i(M) &\text{ if } s=k-1,\\
    \binom{n-1}{k}-m+d_i(M) &\text{ if } s=k,\\
    \binom{n-1}{s} &\text{ if } k+1\le s\le n-1.
  \end{cases} 
\end{equation*}
\end{lemma}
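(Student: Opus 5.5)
We will compute $a_i^M(s)$ directly from Definition~\ref{def:a}, splitting by the size $s$ of $S\subseteq[n]\setminus\{i\}$ and using only properties (a) and (b) of sparse paving matroids. We also use that a rank-$k$ sparse paving matroid with $k\le n-2$ has neither loops nor coloops. The key observation is that $i\in\cl_M(S)$ means $r(S\cup\{i\})=r(S)$, so throughout we compare the ranks of $S$ and $S\cup\{i\}$.

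\emph{Case $1\le s\le k-2$.} Here $|S\cup\{i\}|\le k-1$, so by (a) both $S$ and $S\cup\{i\}$ are independent. Thus $r(S\cup\{i\})=s+1\neq s=r(S)$, and no such $S$ contributes, giving $a_i^M(s)=0$.

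\emph{Case $s=k-1$.} By (a), $S$ is independent of rank $k-1$. Hence $i\in\cl_M(S)$ iff $r(S\cup\{i\})=k-1$, i.e.\ iff the $k$-set $S\cup\{i\}$ is dependent. The map $S\mapsto S\cup\{i\}$ is a bijection onto the $k$-sets containing $i$, so the count is exactly $d_i(M)$.

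\emph{Case $s=k$.} This is the delicate step. If $S$ is a basis ($r(S)=k$), then $r(S\cup\{i\})=k$ as well, so $i\in\cl_M(S)$. There are $\binom{n-1}{k}$ $k$-subsets of $[n]\setminus\{i\}$, and the dependent ones among them are the members of $\dH(M)$ not containing $i$, of which there are $m-d_i(M)$. So the bases contribute $\binom{n-1}{k}-m+d_i(M)$. It remains to show that a dependent $k$-set $S\not\ni i$ never has $i\in\cl_M(S)$. Such an $S$ has rank $k-1$ by (a). Suppose $r(S\cup\{i\})=k-1$. Pick $x\in S$; the set $S'=(S\setminus\{x\})\cup\{i\}$ is a $k$-subset of $S\cup\{i\}$, so it has rank at most $k-1$ and is dependent. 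Since $i\in S'\setminus S$, we have $S'\neq S$, yet $|S\cap S'|=k-1$, contradicting (b). Hence these sets contribute nothing.

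\emph{Case $k+1\le s\le n-1$.} We claim every $(k+1)$-set has rank $k$; by monotonicity every larger set then does too. The claim follows from (b): if a $(k+1)$-set $U$ had rank $k-1$, then each of its $k+1\ge 2$ $k$-subsets would be dependent, and two of them meet in $k-1$ elements, contradicting (b). Then $r(S)=k=r(S\cup\{i\})$, so every $S$ counts, giving $a_i^M(s)=\binom{n-1}{s}$.

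The main obstacle is the $s=k$ case, specifically ruling out dependent $S$ via (b). Each case uses only (a) and (b), so the argument never invokes duality explicitly. The hypothesis $2\le k\le n-2$ guarantees that the four ranges of $s$ are consistent, and ensures the matroid has no loops or coloops, which the formula for $s=k$ implicitly requires.
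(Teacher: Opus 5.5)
Your proof is correct and follows the same case split on $s$ as the paper's proof, using (a) for $s\le k-1$ and (b) for $s\ge k$; it also fills in two steps the paper only sketches, namely that a dependent $k$-set $S\not\ni i$ never has $i\in\cl_M(S)$, and that every set of size at least $k+1$ has rank $k$. Your remarks about loops and coloops are harmless, but none of your four cases actually uses them.
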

\begin{comment}
\begin{IEEEproof}
Fix $i$ and let $S\subseteq[n]$ with $i\notin S$ and $|S|=s$. If $s\leq k-2$, since every subset of cardinality $k-1$ is independent, then $\cl(S)=S\not\ni i$, and so $\alpha_i^M(s)=0$.
If $s=k-1$, then $i\in\cl(S)$ if and only if
$S\cup\{i\}\in\dH(M)$, i.e.\ $S=C\setminus\{i\}$ for some $C\in\dH(M)$ with $i\in C$. So, $\alpha_i^M(k-1)=d_i(M)$.
If $s=k$, then $i\in\cl(S)$ if and only if $S\notin\dH(M)$, in which case $\cl(S)=[n]$. This follows quite directly from the fact that the intersection among two circuits of cardinality $k$ has intersection of size at most $k-2$. The number of $k$-sets avoiding $i$ is
$\binom{n-1}{k}$, of which exactly $m-d_i(M)$ belong to $\dH(M)$. Hence,
$\alpha_i^M(k)=\binom{n-1}{r}-(m-d_i(M))$. Finally, if $s\ge k+1$, then $\cl(S)=[n]$ for every subset $S$ of cardinality $s$,, so $\alpha_i^M(s)=\binom{n-1}{s}$.
\end{IEEEproof}
\end{comment}
From Lemma~\ref{lemma:alpha}, it is immediately clear that the condition for a matroid to be recovery balanced can be expressed in terms of the $d_i(M)$'s.
\begin{corollary}\label{cor:regular}
Let $M$ be a sparse paving matroid of rank $k$ on~$[n]$ with $2\le k\le n-2$. Then
$M$ is recovery balanced if and only if
$d_1(M)=d_2(M)=\dots=d_n(M)$. In that case $n\mid km$ and $d_i(M)=km/n$ for all $i \in [n]$.
\end{corollary}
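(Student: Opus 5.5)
The plan is to substitute the formula of Lemma~\ref{lemma:alpha} into the definition of $A_i^M$ and isolate the dependence on $i$. Only the terms $s=k-1$ and $s=k$ involve $d_i(M)$; all other terms are independent of $i$. Since $2\le k\le n-2$, both indices $k-1$ and $k$ lie in $\{1,\dots,n-1\}$, so both terms actually appear in the sum. Collecting the $i$-independent part into a constant $K(n,k,m)$, we obtain
\[
  A_i^M \;=\; K(n,k,m) + d_i(M)\left(\frac{1}{\binom{n-1}{k-1}}+\frac{1}{\binom{n-1}{k}}\right).
\]
Here the constant is
\[
  K(n,k,m)=\sum_{s=k}^{n-1}1-\frac{m}{\binom{n-1}{k}},
\]
which does not depend on $i$, because $m=|\dH(M)|$ is a global quantity.

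The coefficient of $d_i(M)$ is strictly positive. Hence $A_i^M$ is a strictly increasing affine function of $d_i(M)$, with the same slope and intercept for every $i$. It follows that $A_i^M=A_j^M$ if and only if $d_i(M)=d_j(M)$. By the definition of a recovery balanced matroid, $M$ is then recovery balanced exactly when all the $d_i(M)$ coincide.

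For the final assertion, I will double count incidences between elements and dependent $k$-sets. Each $H\in\dH(M)$ contains exactly $k$ elements, so
\[
  \sum_{i=1}^n d_i(M)=km.
\]
If all $d_i(M)$ are equal to a common value $d$, this gives $nd=km$. Since $d$ is an integer, $n\mid km$ and $d=km/n$.

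There is no real obstacle in this argument. The only points requiring care are checking that the range $2\le k\le n-2$ places both special indices inside the summation range, and confirming that the coefficient of $d_i(M)$ is nonzero, so that the affine relation can be inverted.
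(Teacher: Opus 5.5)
Your proposal is correct and is essentially the paper's argument. Substituting Lemma~\ref{lemma:alpha} into $A_i^M$ shows that $A_i^M$ is an affine function of $d_i(M)$ with positive slope and an $i$-independent intercept, and the divisibility claim follows from the double count $\sum_{i} d_i(M)=km$; you have written out the computation that the paper calls immediate.
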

\begin{comment}
\begin{IEEEproof}
    The first part of the statement follows immediately from Lemma~\ref{lemma:alpha}. For the second part, note that every $i$ belong to the same number of dependent $k$-sets.
\end{IEEEproof}
\end{comment}
We say that an independent set $\dH\in\I(n,k)$ is regular if $d_1(\dH)=\dots=d_n(\dH)$. In view of Corollary~\ref{cor:regular}, counting recovery balanced sparse paving matroids of rank $k$ amounts to counting regular independent sets of the Johnson graph $J(n,k)$.
Our goal now is to determine the proportion of regular independent sets in a Johnson graph. We do this by estimating separately the cardinality of the families of regular independent sets with small and large cardinalities; proofs are deferred to the long version in the interest of space.
\begin{lemma}\label{lemma:bipartitegraph}
Let $1\le k\le n-1$ and $t\ge 1$. Then
\begin{equation*}
    \lvert \{\dH\in\I(n,k):\dH \text{ is regular and }|\dH|\geq t\}\rvert
   \leq\frac{I(n,k)}{t+1}.
\end{equation*}
\end{lemma}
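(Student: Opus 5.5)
We will prove the bound with a double-counting argument on a bipartite graph. The left side consists of the regular independent sets $\dH\in\I(n,k)$ with $|\dH|\ge t$. The right side is all of $\I(n,k)$. We join a regular $\dH$ to each of the families $\dH\setminus\{H\}$ with $H\in\dH$. Since subsets of independent sets of $J(n,k)$ are again independent, every neighbour lies in $\I(n,k)$.

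\emph{Left degrees.} Each left vertex $\dH$ has exactly $|\dH|\ge t$ distinct neighbours, one for each removed element. So the number of edges is at least $t$ times the number $R_t$ of left vertices.

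\emph{Right degrees.} The key claim is that every $\mathcal{B}\in\I(n,k)$ has at most one neighbour. Suppose $\mathcal{B}=\dH\setminus\{H\}=\dH'\setminus\{H'\}$ with $\dH,\dH'$ both regular and $H\neq H'$. Write $\mathbf d(\cdot)=(d_1(\cdot),\dots,d_n(\cdot))$, and recall that $d_i(\mathcal F)$ counts the members of $\mathcal F$ containing $i$. Then $\mathbf d(\dH)=\mathbf d(\mathcal{B})+\mathbf 1_H$ and $\mathbf d(\dH')=\mathbf d(\mathcal{B})+\mathbf 1_{H'}$. Regularity makes both of these constant vectors. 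Hence $\mathbf 1_H-\mathbf 1_{H'}$ is a constant vector. Its entries lie in $\{-1,0,1\}$ and sum to $|H|-|H'|=0$, so the constant is $0$ and $H=H'$. Therefore $\dH=\mathcal{B}\cup\{H\}$ is uniquely determined by $\mathcal{B}$, and each right vertex has degree at most one.

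This bounds the number of edges by at most the number of right vertices that actually receive an edge. That alone gives only $tR_t\le I(n,k)$, whereas the target is $(t+1)R_t\le I(n,k)$. To gain the extra $+1$, note that the left vertices themselves belong to $\I(n,k)$. We must check that no left vertex is also a neighbour of some regular $\dH$. If it were, we would have $\dH\setminus\{H\}$ regular with $|\dH\setminus\{H\}|\ge t\ge1$, and $\dH$ regular as well. Then $\mathbf 1_H=\mathbf d(\dH)-\mathbf d(\dH\setminus\{H\})$ would be constant. This is impossible, since $H$ is a $k$-set with $1\le k\le n-1$.

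So the left vertices and the neighbourhoods are disjoint subsets of $\I(n,k)$. Moreover, the neighbourhoods of distinct left vertices are pairwise disjoint, each of size $|\dH|\ge t$. Together these give $R_t+tR_t\le I(n,k)$, which is the claim.

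The main obstacle is the uniqueness step, namely that one removal determines the regular parent. It rests entirely on the observation that an indicator difference $\mathbf 1_H-\mathbf 1_{H'}$ of two $k$-sets is constant only when $H=H'$. The disjointness of the left vertices from the neighbourhoods is what provides the $+1$.
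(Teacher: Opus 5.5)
Your proposal is correct and follows essentially the same route as the paper. The paper builds the same bipartite graph, joining each regular family to the families obtained by deleting one member. It uses the same fact that such a deletion changes exactly $k$ of the $d_i$, so the result is non-regular and has at most one regular parent, and it concludes with the same double count $tR_t\le I(n,k)-R_t$; your indicator-vector argument $\mathbf 1_H-\mathbf 1_{H'}$ simply spells out the uniqueness step that the paper states tersely.
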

\begin{comment}
\begin{IEEEproof}
We consider a bipartite graph $G$ whose set of vertices is $\I(n,r)$ as follows. The edges of $G$ are given by the pairs $(\dH_1,\dH_2)$ where $\dH_1$ is regular and $\dH_2$ is obtained by $\dH_1$ by removing an element. The graph $G$ is clearly bipartite between the regular independent sets and the non-regular ones. We observe that $\dH_2$ cannot be regular, because we are removing only one element so exactly $k$ of the $d_i(\dH_1)$ decrease by one, while $n-k$ remain constant. Moreover, this also proves that each non-regular set is connected to at most one regular set.  

Counting the edges of $G$ from both side of $G$, we get
\begin{equation*}
    t\lvert \{\dH\in\I(n,k):\dH \text{ is regular and }|\dH|\geq t\}\rvert\leq I(n,k)-\lvert\{\dH\in\I(n,k):\dH \text{ is regular and }|\dH|\geq t\}\}\rvert
\end{equation*}
which is exactly the claim.
\end{IEEEproof}
\end{comment}
\begin{lemma}\label{lemma:extrcases}
    Let $1\le k\le n-1$. Then
    \begin{equation*}
        \lvert \{\dH\in\I(n,k):\dH \text{ is regular}\}\rvert
   \leq\frac{\min\{k,n-k\}}{n}I(n,k).
    \end{equation*}
\end{lemma}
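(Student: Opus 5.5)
The plan is to combine the edge-counting argument of Lemma~\ref{lemma:bipartitegraph} with a divisibility lower bound on the size of a nonempty regular independent set. After that, a symmetry reduction and a small amount of bookkeeping for the empty family should give the stated bound.

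\textbf{Step 1: reduce to $k\le n/2$.} Complementation $H\mapsto[n]\setminus H$ is a graph isomorphism $J(n,k)\to J(n,n-k)$. Two $k$-sets meet in $k-1$ elements exactly when their complements meet in $n-k-1$ elements, so the isomorphism maps $\I(n,k)$ bijectively onto $\I(n,n-k)$ and gives $I(n,k)=I(n,n-k)$. It also preserves regularity. If $\dH^c$ denotes the family of complements, then $d_i(\dH^c)=m-d_i(\dH)$, so all $d_i(\dH)$ are equal if and only if all $d_i(\dH^c)$ are equal. Hence it suffices to prove the bound with $\min\{k,n-k\}$ replaced by $k$, assuming $k\le n-k$.

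\textbf{Step 2: nonempty regular families are large.} Suppose $\dH\neq\varnothing$ is regular with $|\dH|=m$ and common degree $d$. Double counting pairs $(i,H)$ with $i\in H\in\dH$ gives $nd=km$, as in Corollary~\ref{cor:regular}. Since $\dH\ne\varnothing$, we have $d\ge1$, so $m=nd/k\ge n/k$.

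\textbf{Step 3: count edges, keeping the empty family separate.} Let $R'$ be the set of nonempty regular independent sets. I will use the bipartite graph from the proof of Lemma~\ref{lemma:bipartitegraph}: join a regular $\dH$ to each family obtained from it by deleting one element. Deleting one element lowers exactly $k$ of the degrees by one, and $k<n$, so each such family is non-regular. Each non-regular family is adjacent to at most one regular family. Note that the empty family is regular and has no such neighbours. By Step~2, each member of $R'$ has at least $\lceil n/k\rceil$ neighbours, and these lie among the $I(n,k)-|R'|-1$ non-regular families. Therefore
\[
  \tfrac{n}{k}\,|R'|\le I(n,k)-1-|R'|,
  \qquad\text{so}\qquad
  |R'|\le \frac{k\,(I(n,k)-1)}{n+k}.
\]
The total number of regular families is $|R'|+1$.

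\textbf{Step 4: the final inequality (the main obstacle).} It remains to show
\[
  1+\frac{k(I-1)}{n+k}\le \frac{kI}{n}, \qquad I=I(n,k).
\]
Rearranged, this is $n(n+k)\le k^2 I+kn$, i.e.\ $n^2\le k^2I$. Every single $k$-set is an independent set of $J(n,k)$, and so is $\varnothing$, so $I\ge 1+\binom nk$. For $k\ge1$ and $n\ge2$ we have $\binom nk\ge n$, hence $I\ge n+1$, and $n^2\le k^2(n+1)$ holds for every $k\ge1$. So the inequality needs only this crude lower bound on $I(n,k)$. The delicate point is exactly the treatment of the empty family: it is always regular but cannot be charged to any non-regular neighbour. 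I will therefore double-check the edge case $k=1$ directly. There $J(n,1)$ is complete, so $I(n,1)=n+1$ and the only regular family is $\varnothing$ (for $n\ge2$), giving $1\le (n+1)/n$.
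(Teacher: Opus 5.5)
Your strategy is the paper's: the edge count in the bipartite graph of Lemma~\ref{lemma:bipartitegraph}, and $|\dH|\ge n/k$ from $nd=km$. The paper passes to $n-k$ through matroid duality, which sends $\dH$ to the family of complements; that is exactly your Step~1. Your accounting for $\varnothing$ is more careful than the paper's, which bounds only the nonempty regular families and never accounts for $\varnothing$ being regular.

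\textbf{The gap is in Step~4.} The claim that $n^2\le k^2(n+1)$ holds for every $k\ge1$ is false. It fails for $k=1$ and every $n\ge2$. It also fails, for example, for $k=2$, $n=10$, since $100>44$. In general it holds only when $k\ge n/\sqrt{n+1}$. So the crude bound $I\ge n+1$ does not close the argument, and your assertion that the inequality ``needs only this crude lower bound'' is wrong.

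\textbf{The repair} uses the stronger bound you already wrote down, $I\ge 1+\binom nk$.
\begin{itemize}
\item Case $k\ge2$. After your reduction to $k\le n-k$, this gives $2\le k\le n-2$. Hence $\binom nk\ge\binom n2$, and
$k^2 I\ge 4\bigl(1+\binom n2\bigr)=n^2+(n-1)^2+3>n^2$.
\item Case $k=1$. Here $n^2\le k^2 I$ genuinely fails, because $I(n,1)=n+1<n^2$. So your direct computation (only $\varnothing$ is regular, and $1\le (n+1)/n$) is not a sanity check but a necessary separate case.
\end{itemize}
With these two cases made explicit, Steps~1--3 together with the corrected Step~4 give a complete proof.
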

\begin{comment}
\begin{IEEEproof}
    Let $G$ be as in the proof of Lemma~\ref{lemma:bipartitegraph}. Then, by counting the edges of $G$ on both side we get
    \begin{equation*}
        \sum_{\dH\in\I(n,k)\setminus\emptyset;\\
        \dH\text{ is regular}}\lvert\dH\rvert\leq I(n,k).
    \end{equation*}
    By Corollary~\ref{cor:regular} we get $\lvert\dH \rvert\geq n/k$. Let $M$ be the sparse paving matroid corresponding to $\dH$. Since $M$ is sparse paving, the circuits of size $n-k$ in the dual matroid of $M$ are exactly the complements of the circuits of size $k$ in $M$. Therefore, by duality we get $\lvert\dH \rvert\geq n/(n-k)$. 
\end{IEEEproof}
\end{comment}
\begin{lemma}\label{lem:small}
Let $1\le k\le n-1$ and $t\ge 1$, and suppose $n\geq 2$. Then
\begin{equation*}
    \lvert \{\dH\in\I(n,k):\dH \text{ is regular and }|\dH|< t\}\rvert\leq 2^{nt}
\end{equation*}
\end{lemma}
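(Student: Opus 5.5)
The plan is a crude counting argument that ignores regularity and independence. We simply bound the number of subfamilies $\dH\subseteq\binom{[n]}{k}$ with $|\dH|<t$, that is, with $|\dH|\le t-1$. Every regular independent set of $J(n,k)$ with fewer than $t$ elements is such a family, so
\[
  \lvert \{\dH\in\I(n,k):\dH \text{ is regular and }|\dH|< t\}\rvert
  \le \sum_{j=0}^{t-1}\binom{\binom{n}{k}}{j}.
\]

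Next we bound the right-hand side. Write $N=\binom{n}{k}$; we have $N\le 2^n-1$ for $n\ge 2$, since the binomial coefficients sum to $2^n$ and at least two of them, for example $\binom{n}{0}$ and $\binom{n}{n}$, are excluded when $1\le k\le n-1$. A family of size at most $t-1$ can be encoded as a sequence of exactly $t-1$ entries, each entry being either an element of $\binom{[n]}{k}$ or a blank symbol. Listing the elements in some order and padding with blanks makes this encoding injective. Hence
\[
  \sum_{j=0}^{t-1}\binom{N}{j}\le (N+1)^{t-1}\le (2^n)^{t-1}=2^{n(t-1)}\le 2^{nt}.
\]
The hypothesis $n\ge 2$ is only needed for $N+1\le 2^n$. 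In fact the weaker bound $\binom{n}{k}\le 2^n$ would already give $(N+1)^{t-1}\le (2^n+1)^{t-1}\le 2^{nt}$ once one absorbs the $+1$; this uses $2^n+1\le 2^{n+1}$ and checks that $(n+1)(t-1)\le nt$ when $t\le n+1$. The cleaner route via $N\le 2^n-1$ avoids that case analysis, so I would use it.

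There is no real obstacle; the only care needed is the elementary bound on $\sum_{j<t}\binom{N}{j}$ and the edge case $t=1$. For $t=1$ the only family counted is $\varnothing$, and $\varnothing$ is regular with all $d_i=0$, so the count is $1\le 2^n$, which is consistent with the general bound. The statement is deliberately loose; it only has to be combined with Lemma~\ref{lemma:bipartitegraph}, and with the known growth of $I(n,k)$, which is at least $2^{\binom{n}{k}/n}$, for a suitable choice of $t$ later.
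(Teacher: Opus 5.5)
Your argument is correct and is essentially the paper's: both discard regularity and independence, bound the count by $\sum_{j<t}\binom{N}{j}$ with $N=\binom{n}{k}$, and then use $N<2^n$ crudely. Your padding encoding, which gives $(N+1)^{t-1}\le 2^{n(t-1)}$, is a slightly cleaner and marginally sharper way to finish than the paper's chain $\sum_{j<t}\binom{N}{j}\le\sum_{j<t}N^j\le 2^{nt}$. The aside about $(2^n+1)^{t-1}$ only holds under the extra condition $t\le n+1$ and is not needed, so you can drop it.
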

\begin{IEEEproof}
    Clearly the number of regular independent sets of size at most $t$ is bounded by the number of independent sets of size at most $t$. Moreover,
    \begin{equation*}
        \lvert \{\dH\in\I(n,k):|\dH|<t\}\rvert\leq\sum_{i=0}^{t-1}\binom{N_k}{i}\leq \sum_{i=0}^{t-1}N_k^i\leq 2^{nt},
    \end{equation*}
    where $N_k=\binom{n}{k}$.
\end{IEEEproof}
To determine a lower bound on the number of sparse paving matroids, inspired by~\cite[Equation 2]{mayhew2013number}, we use~\cite[Theorem 1]{graham1980lower} from Graham and
Sloane.
\begin{theorem}\label{theorem:grahamsloane}
Let $A(n,4,w)$ denote the maximum number of codewords in any binary code of length $n$, minimum distance~$4$, and constant weight $w$. Then,
\begin{equation*}
    A(n,4,k)\geq\frac{1}{n}\binom{n}{k}.
\end{equation*}
\end{theorem}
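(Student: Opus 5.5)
The statement is the Graham--Sloane bound $A(n,4,k)\ge\frac1n\binom{n}{k}$, and I would prove it with their partition construction. For $x\in\binom{[n]}{k}$, viewed as a weight-$k$ binary vector, define the residue
\[
  \sigma(x)=\sum_{j\in x} j \pmod n .
\]
The map $\sigma$ partitions $\binom{[n]}{k}$ into $n$ classes $\code_0,\dots,\code_{n-1}$. By pigeonhole, some class $\code_c$ has at least $\frac1n\binom nk$ elements. The whole proof therefore reduces to showing that each class is a constant-weight code with minimum distance at least $4$.

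Two distinct words of the same weight $k$ always have even Hamming distance, so their distance is at least $2$. Distance exactly $2$ occurs precisely when $y=(x\setminus\{a\})\cup\{b\}$ for some $a\in x$ and $b\notin x$ with $a\ne b$. In that case
\[
  \sigma(y)-\sigma(x)\equiv b-a \pmod n .
\]
Since $a,b\in[n]$ are distinct, $0<|b-a|\le n-1$, so $b-a\not\equiv 0 \pmod n$. Hence $x$ and $y$ lie in different classes. It follows that any two distinct words in the same class are at distance at least $4$, and $A(n,4,k)\ge|\code_c|\ge\frac1n\binom nk$.

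There is no real obstacle here. The only points needing care are the parity argument, which guarantees that distance $3$ cannot occur, and checking that the residues of $b-a$ are nonzero modulo $n$. The latter is exactly why we take residues modulo $n$ and not modulo something smaller. Within the paper, this bound will later be read as the existence of an independent set of $J(n,k)$ of size at least $\binom nk/n$. All of its subsets are again independent, which gives $I(n,k)\ge 2^{\binom nk/n}$.
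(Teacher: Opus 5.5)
Your proof is correct. The partition by $\sum_{j\in x} j \bmod n$, combined with the observation that a distance-$2$ swap changes this residue by $b-a\not\equiv 0 \pmod n$, is exactly the original Graham--Sloane argument. The paper gives no proof of its own and simply cites \cite[Theorem 1]{graham1980lower}, so your write-up supplies the standard proof behind that citation, and your closing remark matches how the bound is used in Lemma~\ref{lemma:sparse}.
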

\begin{lemma}\label{lemma:sparse}
Let $N_k=\binom{n}{k}$. Then $I(n,k)\geq 2^{N_k/n}$.
\end{lemma}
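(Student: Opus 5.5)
The idea is to combine the Graham--Sloane bound (Theorem~\ref{theorem:grahamsloane}) with the observation that subsets of an independent set of $J(n,k)$ are again independent sets. A binary constant-weight code of length $n$, weight $k$ and minimum distance $4$ is a family $\mathcal{A}\subseteq\binom{[n]}{k}$ of $k$-subsets (identify each codeword with its support). Two distinct supports $X,Y$ of size $k$ have Hamming distance $2(k-|X\cap Y|)$. So distance at least $4$ is equivalent to $|X\cap Y|\le k-2$, i.e.\ to $X$ and $Y$ not being adjacent in $J(n,k)$. Hence such codes are exactly the independent sets of $J(n,k)$.

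The steps are as follows.
\begin{enumerate}
\item Fix $1\le k\le n-1$ and apply Theorem~\ref{theorem:grahamsloane}. This gives an independent set $\mathcal{A}\in\I(n,k)$ with $|\mathcal{A}|\ge N_k/n$.
\item Note that independence in a graph is hereditary, so every subfamily $\mathcal{B}\subseteq\mathcal{A}$ is again an independent set of $J(n,k)$.
\item Since distinct subfamilies are distinct elements of $\I(n,k)$, conclude
\[
  I(n,k)\ \ge\ 2^{|\mathcal{A}|}\ \ge\ 2^{N_k/n}.
\]
\end{enumerate}
If one also wants the interpretation in terms of matroids, Lemma~\ref{lemma:bijection} says that each such $\mathcal{B}$ is $\dH(M)$ for a unique sparse paving matroid $M$.

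There is no real obstacle. The only point to check is the translation between minimum distance $4$ and non-adjacency in the Johnson graph, which is the one-line distance computation above. The bound $|\mathcal{A}|\ge N_k/n$ need not be an integer, but this is harmless: $|\mathcal{A}|$ is an integer at least $N_k/n$, so $2^{|\mathcal{A}|}\ge 2^{N_k/n}$ still holds. If the paper's version of the Graham--Sloane theorem is stated only for some range of $k$, the cases $k=1$ and $k=n-1$ can be handled directly. When $k=1$, $J(n,1)$ is complete, so $I(n,1)=n+1\ge 2^{1}=2^{N_1/n}$. The case $k=n-1$ follows by the symmetry $J(n,k)\cong J(n,n-k)$ under complementation.
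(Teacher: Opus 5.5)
Your proposal is correct and follows the paper's proof: take the Graham--Sloane constant-weight code of size at least $N_k/n$, observe that it and each of its subfamilies is an independent set of $J(n,k)$, and conclude $I(n,k)\ge 2^{|\mathcal{A}|}\ge 2^{N_k/n}$. Your explicit translation of distance $4$ into $|X\cap Y|\le k-2$ and your remarks on integrality and the extreme values of $k$ fill in details that the paper leaves implicit.
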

\begin{IEEEproof}
    By Theorem~\ref{theorem:grahamsloane}, there exists a code $\mathcal{C}$ of minimum distance $4$ and constant weight $k$ of cardinality at least $\frac{1}{n}\binom{n}{k}$. Any subset of $\mathcal{C}$ corresponds to an independent set of  $J(n,k)$. We conclude that $\mathcal I(n,k)$ has cardinality at least $2^{\lvert\mathcal{C}\rvert}$.
\end{IEEEproof}
We can now prove the main theorem of this section.
\begin{theorem}\label{theorem:4/n}
    There exists $n_0\in\mathbb{N}$ such that for every $n\geq n_0$ and every $1\leq k\leq n-1$ we have
    \begin{equation*}
       \frac{\lvert \{\dH\in\I(n,k):\dH \text{ is regular}\}\rvert}{I(n,k)}\leq\frac{4}{n}
    \end{equation*}
\end{theorem}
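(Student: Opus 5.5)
Throughout, let $R(n,k)$ denote the number of regular independent sets of $J(n,k)$. The plan is to split according to the size of $k$ relative to $n$. When $k$ or $n-k$ is small, Lemma~\ref{lemma:extrcases} already does the job. Otherwise $N_k=\binom{n}{k}$ is huge, and we combine Lemma~\ref{lemma:bipartitegraph} (regular sets that are large) with Lemma~\ref{lem:small} (regular sets that are small), using Lemma~\ref{lemma:sparse} to show the small ones are negligible.

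\emph{Extreme ranks.} If $\min\{k,n-k\}\le 3$, then Lemma~\ref{lemma:extrcases} gives directly
\[
R(n,k)\le \frac{\min\{k,n-k\}}{n}\,I(n,k)\le\frac{3}{n}\,I(n,k)\le\frac{4}{n}\,I(n,k),
\]
for every $n$. So we may assume $4\le k\le n-4$. In this range $N_k\ge\binom{n}{4}$, which is of order $n^4/24$.

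\emph{Middle ranks.} Choose $t=\lceil n/2\rceil$, or any $t$ with $t+1\ge n/2$ and $t$ of order $n$. Split the regular sets into those of size $\ge t$ and those of size $<t$.
\begin{itemize}
\item By Lemma~\ref{lemma:bipartitegraph}, the regular sets of size $\ge t$ number at most $I(n,k)/(t+1)\le \frac{2}{n}I(n,k)$.
\item By Lemma~\ref{lem:small}, the regular sets of size $<t$ number at most $2^{nt}\le 2^{n^2/2+n}$.
\end{itemize}
By Lemma~\ref{lemma:sparse}, $I(n,k)\ge 2^{N_k/n}\ge 2^{\binom{n}{4}/n}$, and this exponent is of order $n^3/24$. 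Hence
\[
\frac{2^{nt}}{I(n,k)}\le 2^{\,n^2/2+n-\binom{n}{4}/n}.
\]
For large $n$ this is at most $2/n$, indeed exponentially small. Adding the two contributions gives $R(n,k)\le \frac{4}{n}I(n,k)$ for all $n\ge n_0$, uniformly in $k$.

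\emph{Main obstacle.} The hard point is making the small-set bound uniform in $k$. We need $N_k/n$ to dominate $nt$ for every $k$ in the middle range, and this is why the cutoff at $\min\{k,n-k\}\le 3$ is essential. For $k=2$, $N_k/n$ is only of order $n/2$, far too small. For $k=4$ we get order $n^3$, which beats order $n^2$. Since $\binom{n}{k}$ is unimodal, it suffices to check the endpoints $k=4$ and $k=n-4$. The threshold $n_0$ is then fixed by requiring $\binom{n}{4}/n-n^2/2-n\ge \log_2(n/2)$.
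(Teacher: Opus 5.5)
Your proof is correct and follows the paper's argument: Lemma~\ref{lemma:extrcases} handles the extreme ranks, and for the middle ranks you split at a threshold $t$ and combine Lemmas~\ref{lemma:bipartitegraph}, \ref{lem:small} and~\ref{lemma:sparse}. The only difference is the choice of threshold. You take $t=\lceil n/2\rceil$, independent of $k$, which suffices because $\binom{n}{4}/n\sim n^3/24$ dominates $n^2/2$ and so lets you cut off at $\min\{k,n-k\}\le 3$; the paper instead takes $t=\lceil N_k/n^3\rceil$, which needs $N_k\ge n^4$ and hence the cutoff $k\le 4$ or $k\ge n-4$.
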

\begin{IEEEproof}
    If $k\leq 4$ or $k\geq n-4$, the result follows directly from Lemma~\ref{lemma:extrcases}. So, assume that $4<k<n-4$. In particular, when $n$ is large enough we get $N_k=\binom{n}{k}\geq n^4$. We set $t=\lceil N_k/n^3\rceil$. By Lemmas~\ref{lemma:bipartitegraph} and~\ref{lem:small},
    \begin{equation*}
    \begin{split}
        \lvert \{\dH\in\I(n,k):\dH \text{ is regular}\}\rvert &\leq \frac{I(n,k)}{\lceil N_k/n^3\rceil}+2^{nt}\\ &\leq\frac{I(n,k)}{n}+2^{nt}.
    \end{split}
    \end{equation*}
    We divide both sides by $I(n,k)$ and by Lemma~\ref{lemma:sparse} for $n$ large enough we get
    \begin{equation*}
      \frac{\lvert \{\dH\in\I(n,k):\dH \text{ is regular}\}\rvert}{I(n,k)}\leq \frac{1}{n}+2^{\frac{n^2\lceil N_k/n^3\rceil-N_k}{n}}\leq\frac{4}{n}.  
    \end{equation*} 
\end{IEEEproof}
The previous theorem can also be restated as follows.
\begin{corollary}\label{coro:recbalsparse}
    Asymptotically almost every sparse paving matroid is not recovery balanced, that is
    \begin{equation*}
        \lim_{n\to\infty}\frac{\lvert\{\mathrm{\text{rec. balanced sparse paving matroids on}}\,[n]\}\rvert}{\lvert\{\mathrm{\text{sparse paving matroids on}}\,[n]\}\rvert}=0.
    \end{equation*}
\end{corollary}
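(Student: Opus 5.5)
The plan is to derive the corollary directly from Theorem~\ref{theorem:4/n}, using Lemma~\ref{lemma:bijection} and Corollary~\ref{cor:regular} to move between matroids and independent sets of Johnson graphs. First I pin down the count. The sparse paving matroids on $[n]$ are partitioned by rank $k\in\{0,1,\dots,n\}$. For $1\le k\le n-1$, Lemma~\ref{lemma:bijection} identifies the rank-$k$ sparse paving matroids with $\I(n,k)$. The ranks $k=0$ and $k=n$ each contribute exactly one matroid, the uniform one. It has no loops or coloops issues at the level of $A_i^M$, and it is trivially recovery balanced by symmetry. So the denominator equals $2+\sum_{k=1}^{n-1}I(n,k)$, and the numerator is at most $2+\sum_{k=1}^{n-1}R(n,k)$, where $R(n,k)$ counts recovery balanced sparse paving matroids of rank $k$.

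Next I bound each $R(n,k)$ by $\tfrac4n I(n,k)$ for $n\ge n_0$.

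For $2\le k\le n-2$, Corollary~\ref{cor:regular} says that recovery balance is equivalent to regularity of $\dH(M)$. Therefore $R(n,k)$ equals the number of regular independent sets, and Theorem~\ref{theorem:4/n} applies directly.

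For $k=1$ and $k=n-1$, Lemma~\ref{lemma:alpha} is not available, so I handle these by hand.

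For $k=1$, sparse paving means the dependent $1$-sets (the loops) are pairwise disjoint, which is automatic. Every $1$-set is a vertex of $J(n,1)=K_n$, so an independent set has at most one element. Hence $I(n,1)=n+1$: the uniform matroid, plus one matroid with a single loop for each of the $n$ choices of loop. The uniform matroid is balanced. The one-loop matroids are not balanced for $n\ge 3$, since the loop has a different value of $A_i^M$ from the other elements. (Alternatively, bound these $O(1)$ matroids crudely.) Either way, $R(n,1)\le 1\le \tfrac4n I(n,1)$ once $n\ge 3$. The case $k=n-1$ follows by duality, using Theorem~\ref{thm:identity} or simply the same direct count.

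Summing over $k$ then gives
\[
\frac{\#\{\text{rec.\ balanced}\}}{\#\{\text{sparse paving}\}}\le \frac{2+\tfrac4n\sum_k I(n,k)}{2+\sum_k I(n,k)}\le \frac4n+\frac{2}{\sum_k I(n,k)}.
\]

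The last step is to show this tends to $0$. Since $\sum_k I(n,k)\ge I(n,\lfloor n/2\rfloor)\ge 2^{N_{\lfloor n/2\rfloor}/n}\to\infty$ by Lemma~\ref{lemma:sparse}, both terms vanish, which proves the limit.

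The main obstacle is a bookkeeping one rather than a conceptual one: making sure the boundary ranks ($k\in\{0,1,n-1,n\}$), where Corollary~\ref{cor:regular} is not stated, are either correctly matched with regularity or absorbed as lower-order terms. My plan is to absorb them crudely, bounding each by $O(n)$ matroids, which is negligible against the exponential growth of the total count.
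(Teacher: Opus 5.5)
Your proposal is correct and takes essentially the paper's route: the paper treats the corollary as a restatement of Theorem~\ref{theorem:4/n}, summed over ranks via Lemma~\ref{lemma:bijection} and Corollary~\ref{cor:regular}, and your explicit treatment of ranks $0,1,n-1,n$ supplies bookkeeping the paper leaves implicit. One small slip: for $k=1$ the sparse-paving condition allows at most one loop, because two distinct dependent $1$-sets would have to meet in at most $k-2=-1$ elements; it is not the automatic condition that loops be pairwise disjoint. Your next step, $I(n,1)=n+1$, already uses the correct condition, so the argument is unaffected.
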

We believe that the assumption regarding sparse paving in the previous theorem can be eliminated. That is, we believe that asymptotically almost every matroid is not recovery balanced.

\begin{remark}
We point out that in~\cite[Conjecture 1.6]{mayhew2011asymptotic} Mayhew, Newman, Welsh, and Whittle conjectured that almost every matroid is sparse paving. This is strongly supported by the counting results of Bansal, Pendavingh and van der Pol \cite{bansal2015number}. A positive answer to their conjecture, together with Theorem~\ref{theorem:4/n}, would imply that indeed  almost every matroid is not recovery balanced when $n\to\infty$.
 
\end{remark}
\begin{remark}
    Note that Corollary~\ref{coro:recbalsparse} concerns the uniform measure of all sparse paving matroids and implies nothing about representable ones. Recall that a linear $[n,k]$ code $C$ is near-MDS when $d(C)=n-k$ and $d(C^\perp)=k$; since $d(C)$ and $d(C^\perp)$ are the minimum cocircuit and circuit sizes of the matroid $M(C)$, the matroids of near-MDS codes are exactly the non-uniform representable sparse paving matroids. However, the set of representable matroids asymptotically has measure zero in the set of matroids, and so our approach gives zero information on the size of representable sparse paving matroids.
\end{remark}

\section*{Acknowledgment}
This work was initiated during the “Coding Theory and Cryptography Summer School and Collaboration Workshop”, held at the Stager Center for International Scholarship – Virginia Tech, Switzerland, in July 2026. The authors thank the organizers for facilitating this collaboration.

\section*{AI Disclosure}
Claude Fable suggested proving Theorem~\ref{theorem:4/n} by distinguishing between small and large cardinalities, and suggested a connection of our proof of Theorem~\ref{thm:main} to matroid theory, which was further developed by the authors. ChatGPT-6 Astra was used for grammar and wording.
\bibliographystyle{ieeetr}
\bibliography{bib}

\end{document}